\newtheorem{theorem}{Theorem}
\newtheorem*{property*}{Property}
\newtheorem*{definition*}{Definition}
\begin{document}
\title{\textbf{Normality of the Ehrenfeucht-Mycielski Sequence}}
\author{
\\Kundan Krishna\\Department Of Computer Science And Engineering,\\Indian Institute of Technology,Kanpur,\\KANPUR (UP)-208016\\Email:kkrishna@iitk.ac.in
\and
\\Mentor\\Dr. Satyadev Nandakumar,\\Department Of Computer Science And Engineering \\\\\\
}
\date{July 9, 2014}

\maketitle
\textit{\\\\\\}
\abstract{We study the binary Ehrenfeucht Mycielski sequence seeking a balance between the number of occurrences of different binary strings. There have been numerous attempts to prove the balance conjecture of the sequence, which roughly states that 1 and 0 occur equally often in it. Our contribution is twofold. First, we study weaker forms of the conjecture proved in the past and lay out detailed proofs for many lemmas which were stated without proofs. Secondly, we extend the claim of balance to that of normality and prove a weaker form of simple normality to word length 2.}

\newpage

\tableofcontents

\newpage

\section{Introduction}
In [3] Ehrenfeucht and Mycielski propose a pseudorandom binary sequence, henceforth
called the EM sequence.The EM-sequence is sequence A038219 in the encyclopedia (Slo07), and is generated via an algorithm as follows: “The sequence starts
0,1,0 and continues according to the following rule: find the longest sequence at the end that has occurred
at least once previously. If there are more than one previous occurrences select the last one. The next digit
of the sequence is the opposite of the one following the previous occurrence.” For example, the first 30
terms of the EM-sequence are
 $$010011010111000100001111011001$$
 The longest suffix occurring before is 1001 as shown
 $$0\textbf{1001}101011100010000111101\textbf{1001}$$
 Since the first 1001 was followed by 1, the 31st bit will be 0.\\

The EM sequence arises from the study of decision method used by all
learning organisms. Suppose that we follow the procedure outlined in the above
paragraph, but instead of flipping the bit following the penultimate occurrence
of the sequence, we take it as is; this may be viewed as making our decision
based on past experience, where we search for the event in the past that looked
most like our current predicament.From this point of view, if we flip the last bit (instead of taking it as
is) we are simulating the case in which we are always wrong. 

\subsection{The balance conjecture and normality}
The famous balance conjecture of the EM-sequence states that as the number of bits in the sequence grows towards infinity, the ratio of number of 0s and 1s in the sequence will converge to 1.Coined in 1992, the conjecture remains unsolved till date. Weaker variations of the balance conjecture have been proved, the best of which was proved by Keiffer and Szpankowski in \cite{keiffer}. Unfortunately, the paper only gave an outline of the method and did not give proofs. In this paper, we prove the theorems given in Keiffer's paper.
We analysed the first few thousand bits of the sequence, which suggested that it is normal sequence as well. We extend the method in \cite{keiffer} to prove a weaker form of simple normality of length 2 binary strings, and explore the feasibility of the method for longer strings.
\section{Notations and Previous works from literature}
\subsection{Terminology}
We specify the notation and terminology that will remain in force throughout the paper. $\{0,1\}^+$ denotes the set of all binary strings of finite nonzero length. $\lambda$ denotes the empty string. Capital letters are used to represent strings while small letters for representing bits. String uv is the (left-to-right)
concatenation of string u with string v. $x_i^j$ and em[i..j] denote the
substring of EM-sequence from the $i^{th}$ to $j^{th}$ bit. $N_n(w)$ denotes the number of occurrences of w in $x_1^n $ and $N_k^n(w)$ denotes the number of occurrences of w in $x_k^n$. $|B|$ denotes the length of string $B \in \{0,1\}^+$ . If $a \in \{0,1\}$, $\bar{a}=1-a$  .card(S) or $|S|$ denotes the cardinality of set S. $|T|$ denotes the number of vertices of
tree T .

We define $\alpha(n)$ as the maximum of the lengths of all the strings matched during the formation of the first n bits of the sequence. A string B is called good if the first two occurrences of B have different bits just before them, and bad otherwise.

\subsection{Previous work}
Ehrenfeucht and Mycielski in \cite{em} proved that every possible binary word of any length occurs infinite number of times in the sequence. Following this, in \cite{mcconell},an attempt was made to prove the balance conjecture which yielded that minimum 0.02 fraction of the bits will be 1s. In  \cite{sutner}, Sutner bound this ratio in the interval [0.11,0.89]. The best result is in \cite{keiffer} where this bound is further tightened to [0.25,0.75].Seeing the novel approach by Keiffer and Szpankowski, we decided to explore it to find an improvement.

\subsection{Definitions and used results}
\textit{Definitions. }
\begin{itemize}
\item For each positive integer n, we define $R_n$ to be the set consisting of those strings in $\{0,1\}^*$ which occur at least twice as substrings of the initial segment $x_1^n$ of the EM-sequence.
\item We construct $T_n$, the tree made with exactly the strings in $R_n$ with bits as edgelabels such that concatenated edgelabes of each path from a vertex to root corresponds to a string in $R_n$.\footnote{see \cite{keiffer} for more information on $T_n$ and $T_n^*$}
\item we define $b^+(i)$ to be the longest string starting at $i$ which appears as a substring of the EM-sequence for at least the second time starting at position i, and we define $L_i$ to be the length of $b^+(i)$.
\item Simple Normality is 
\end{itemize}
\textit{Results. }
\begin{itemize}
	\item Whenever a new maxima is attained in the matchlengths during the formation of EM-sequence, the matched string is and initial string($x_1^i$ for some $i$).\footnote{proved in \cite{mcconell}}
	\item For each positive integer k, let $i_k$ be the integer such that $b^+(i_k) = x_1^k$ . Then there is a positive constant C such that $i_k \geq C(2^{k/2} ), k \geq 1$.\footnote{proved in \cite{sutner}} Together with the previous result, it implies that the matchlengths will grow as $\Theta(\log n)$
\end{itemize}

\newpage
\section{Our work}
Our contribution is threefold.
\begin{itemize}
	\item The first part is a collection of proofs for the results given in \cite{keiffer}.
	\item The second part deals with normality of the sequence using the extension of the idea of balanced number of 0s and 1s to balanced number of binary words of length 2 and more.
	\item We wrote a set of programs in Mathematica to calculate the EM-sequence, extract the set $R_n$ and the graphically render the tree $T_n$ with color based differentiation of the balanced and unbalanced parts. The code can be found here: https://bitbucket.org/kukrishna/ehrenfeucht-mycielski-sequence
\end{itemize}

\subsection{Proofs of theorems in Keiffer's paper}
Please note that the indices of the propositions and theorems are according to their appearance in \cite{keiffer}.
\subsubsection{Proofs for Proposition 3.1}

\subsubsection*{Proposition 3.1a}
$$|R_{n}|=n+o(n)$$
\begin{proof}

If $A \in b^{+}(i)$ then surely since A is occurring for the second time, $A	\in R_{n}$.

We now show that the converse is also true.\\
Assume that $A	\in R_{n}$ and $A \notin b^{+}(i)$ for $1<i<n$. Suppose, A occurs for the second time starting at position k. So $b^{+}(k)$ should not be A. And hence, $b^{+}(k)=AB$ where B is a string following A of nonzero length. So AB occurred for the second time at k as well. So, the first occurrence of A must also be followed by B. This is not possible because the first two occurrences of all finite words are followed by different bits.

Hence, $A	\in R_{n} \implies A \in b^{+}(i)$ for some $1<i<n$.

We conclude that $R_{n}$ contains exactly the set of strings $$\{b^{+}(i):1\leq i \leq x\}$$
where $x$ is the largest index such that $x+L_{x}-1 \leq n$. Note that $R_{n}$ also trivially contains the null string. \\
Also, $L_{x} \leq \alpha(n) \\ \implies x \geq n-\alpha(n)+1$

All $b^{+}(i)$s are unique except for $b^{+}(0)$ and $b^{+}(1)$, both of which equal $\lambda$.So, $$|R_{n}|=(x-1),\; \forall n \geq 2 \footnote{Mostly $x$ remains close to $n-\alpha(n)+1$,and hence $|R_{n}|$ closely follows $n-\alpha(n)$.Eg $\alpha(1000)=13$ and $|R_{1000}|=987$} $$
\end{proof}

\subsubsection*{Propositions 3.1b and 3.1c}

$$ card(\{b \in R_{n} : \text{0 is rightmost bit of b}\}) = N_{n}(0)+o(n)$$
$$ card(\{b \in R_{n} : \text{1 is rightmost bit of b}\}) = N_{n}(1)+o(n)$$

\begin{proof}
First we show that any string which is matched during the formation of the EM sequence must be there in $R_{n}$.

Suppose at time j, em[i..j] is the matched string. Then, we claim that $b^{+}(i)$ will be  $em[i..j]$ in all except two cases. In $\alpha(n)$ cases the the matched string is an initial string or and has its third occurrence. $em[i..j]$ is matched at time j, but $b^+(i)=em[i..j+1]$ in those cases. Also, if a bad word is matched as $em[i..j]$, it is its third occurrence and here also $b^+(i)=em[i..j+1]$. We observe that in each of the previous two cases, the matched string under consideration is already included in $R_n$ as $b^+(i)$ where i marks the beginning of its second occurrence.

In all the rest cases, the matched strings are their second occurrences. As we know that the first two occurrences of any string are followed by complementary bits, $em[i..j+1]$ does not occur before,and hence $b^{+}(i)=em[i..j]$.

At each time t, suppose $em[t]=b$,then we have a matched string ending at t and ending with a b. So this will be in $R_{n}$. Amongst the n matches in the first n  bits,we have $\alpha(n)$ initial strings matching twice, while the rest $n-\alpha(n)$ matched strings are all distinct, while  all belong to $R_{n}$. So, 
\begin{equation}
card(\{b \in R_{n} : \text{0 is rightmost bit of b}\})\geq N_{n}(0)-N_{\alpha(n)}(0)\footnote{Note that $\alpha(n)$ is o(n) }
\end{equation}
\begin{equation}card(\{b \in R_{n} : \text{1 is rightmost bit of b}\})\geq N_{n}(1)-N_{\alpha(n)}(1)
\end{equation}

\begin{equation} card(\{b \in R_{n} : \text{0 is rightmost bit of b}\})+ card(\{b \in R_{n} : \text{1 is rightmost bit of b}\}) = |R_{n}|
\end{equation}

Suppose Proposition 3.1b is false. Then 
\begin{equation}
card(\{b \in R_{n} : \text{0 is rightmost bit of b}\}) \geq N_{n}(0)+cn
\end{equation}
for some constant c and infinitely many values of n. Let the set of all such n's be S. 
Adding equation 2 and 4, and substituting into equation 3 we see 
\begin{equation}
	|R_{n}| \geq N_{n}(0)+cn+N_{n}(1)-N_{\alpha(n)}(1)		,\; \forall n \in S
\end{equation}
But $N_{n}(1)+N_{n}(0)=n$, and so 5 reduces to,
$$
	|R_{n}| \geq n+cn-N_{\alpha(n)}(1)		,\; \forall n \in S
$$
Since $N_{\alpha(n)}(1)$ is o(n), $\exists b<c$, such that,
$$
	|R_{n}| \geq n+bn		,\; \forall n \in S
$$

This contradicts proposition 3.1a and hence,
\begin{equation*}
card(\{b \in R_{n} : \text{0 is rightmost bit of b}\}) = N_{n}(0)+o(n)
\end{equation*}
Similarly, by symmetry
\begin{equation*}
card(\{b \in R_{n} : \text{1 is rightmost bit of b}\}) = N_{n}(0)+o(n)
\end{equation*}

\end{proof}

\subsubsection{Proofs of Lemmas 4.1 to 4.4}
In order to prove these lemmas, we coin a new tool called proximty.
\subsection*{Proximity}
\begin{definition*}
Proximity between two strings A and B is defined as the length of their maximal common prefix. The maximal common prefix is called the proximal word.
\end{definition*}
Find $\sigma$ such that they occur as  $b\sigma A \ldots \bar{b}\sigma B$, then $|\sigma|$ is the proximity between A and B. We denote proximity between A and B by p(A,B) and the prefix by P(A,B). We will use the following concept for proving the lemmas.

\begin{property*}
If $p(X,Y) \neq p(X,Z)$,then $p(Y,Z) = min(p(X,Y), p(X,Z))$
\end{property*}
The proof is easy to see. Let $ p(X,Y) < p(X,Z)$ . Let $P(X,Z)=\sigma$. Then, P(X,Y) is a proper suffix of $\sigma$, let's say it's $\beta$ such that $b\beta$ occurs just before Y. If $b\beta$ were a suffix of $\sigma$, P(X,Y) would be $b\beta$. But that can't happen, and as $\sigma$ also occurs before Z, so $P(Y,Z)=P(X,Y)=\beta$.

\subsubsection*{Lemma 4.1}
$B \in \{0,1\}^{+}$ .Then the first 5 occurrences of B in the em-sequence cannot take the form $B\bar{a}_{1},Ba_{1}a_{2},$ $Ba_{1}\bar{a}_{2}, Ba_{1}a_{2}, Ba_{1}a_{2}$ for some $a_{1},a_{2}\in \{0,1\}$.
\begin{proof}
We refer to the $k^{th}$ term given in the lemma's sequence by using the roman numerals for k. Eg. (ii) for the second term (In this case $ Ba_{1}a_{2}$).
For $a_{1}$ to come after the third B, it must match with first B . Hence, p(i,iii) must be greater that p(ii,iii). By the above property, p(ii,iii)=p(i,ii) and $p(i,iii) > p(i,ii)$. Proceeding in the same way,
$$p(i,v)>p(i,iv)>p(i,iii)>p(i,ii)$$
Hence, $p(iv,v)=p(i,iv)$ and $p(iii,v)=p(i,iii)$\\
$\implies p(iv,v)>p(iii,v)$
So,the $Ba_{1}$ in (v) always matches with (iv), and (v) would surely be $ Ba_{1}\bar{a}_{2}$, not $Ba_{1}a_{2}$. The required pattern cannot be produced.
\end{proof}

\subsubsection*{Lemma 4.2}
$B \in \{0,1\}^{+}$ be good.Then the first 4 occurrences of B in the em-sequence cannot take the form $Ba_{1}\bar{a}_{2}, B\bar{a}_{1}, Ba_{1}a_{2}, Ba_{1}a_{2}$ for some $a_{1},a_{2}\in \{0,1\}$.
\begin{proof}
As B is good, $p(i,ii)=0$. As 0B and 1B both have occurred. Without loss of generality,let 1 precede (ii). The third B matches with the second one,which means (iii) must have 1 just before it. Hence, $p(ii,iii) \geq 1$. Now, in the same fashion as in lemma 4.1,
$$p(ii,iv)>p(ii,iii)$$
$p(iii,iv)=p(ii,iii)\geq 1$,which means that (iv) also has a 1 before it. Effectively making $p(i,iv)=0$. Hence, (iv) matches with (iii) and would give $Ba_{1}\bar{a}_{2}$ instead.

\end{proof}

\subsubsection*{Lemma 4.3}
$B \in \{0,1\}^{+}$ .Then the first 5 occurrences of B in the em-sequence cannot take the form $Ba_{1}a_{2},B\bar{a}_{1},$ $Ba_{1}\bar{a}_{2}, Ba_{1}a_{2}, Ba_{1}a_{2}$ for some $a_{1},a_{2}\in \{0,1\}$.
\begin{proof}
As in the previous case,
$$p(ii,v)>p(ii,iv)>p(ii,iii)$$
$p(iv,v)=p(ii,iv)$ and $p(iii,v)=p(ii,iii)$\\
$\implies p(iv,v)>p(iii,v)$
Again forcing the last term to be $ Ba_{1}\bar{a}_{2}$, instead of $Ba_{1}a_{2}$.
\end{proof}

\subsubsection*{Lemma 4.4}
$B \in \{0,1\}^{+}$ .Then the first 4 occurrences of B in the em-sequence cannot take the form $B\bar{a}_{1}, Ba_{1}\bar{a}_{2},$ $Ba_{1}a_{2}, Ba_{1}a_{2}$ for some $a_{1},a_{2}\in \{0,1\}$.
\begin{proof}
Same as lemma 4.3. As I remarked, first term of lemma 4.3 is absent in 4.4 and still everything works fine.
\end{proof}

\subsubsection{Proof of Proposition 4.1}
Let $B \in \{0,1\}^{+}$ be good word.Let $a<b<c<d<e$ be the positive integers at which the first five occurrences of B in the em-sequence {$x_{i}:i \geq 1$} end. Let $u,v$ be the strings.
$$ u=x_{a+1}x_{b+1}x_{c+1}x_{d+1}x_{e+1},\quad v=x_{a+2}x_{b+2}x_{c+2}x_{d+2}x_{e+2},$$

Then at least one of the following statements must be true:\\
(a): $|N_{u}(0)-N_{u}(1)| \leq 1$\\
(b): $|N_{v}(0)-N_{v}(1)| \leq 1$

\begin{proof}
Consider any arbitrary B.If condition (a) fails, the first five occurrences will be either $ B\bar{a}_{1},Ba_{1},Ba_{1},Ba_{1},Ba_{1}$ (as in lemma 4.1 and 4.4) or $Ba_{1},B\bar{a}_{1},Ba_{1},Ba_{1},Ba_{1}$ (lemma 4.2 and 4.3).\\
\textbf{Case 1 : $ B\bar{a}_{1},Ba_{1},Ba_{1},Ba_{1},Ba_{1}$}\\
The second and third occurrence of $Ba_{1}$ will always be followed by complementary bits $a_{2}$ and $\bar{a}_{2}$. For (b) to fail, either $B\bar{a}_{1}, Ba_{1}a_{2}, Ba_{1}\bar{a}_{2}, Ba_{1}a_{2}, Ba_{1}a_{2}$ or $B\bar{a}_{1}, Ba_{1}a_{2}, Ba_{1}\bar{a}_{2}, Ba_{1}\bar{a}_{2}, Ba_{1}\bar{a}_{2}$ must be the case. But both these possibilities are rejected by lemmas 4.1 and 4.4 respectively.\\
\textbf{Case 2 : $Ba_{1},B\bar{a}_{1},Ba_{1},Ba_{1},Ba_{1}$}\\
For (b) to fail,either $Ba_{1}a_{2},B\bar{a}_{1}, Ba_{1}\bar{a}_{2}, Ba_{1}a_{2}, Ba_{1}a_{2}$ or $Ba_{1}a_{2},B\bar{a}_{1}, Ba_{1}\bar{a}_{2},Ba_{1}\bar{a}_{2},Ba_{1}\bar{a}_{2}$ must occur. First possibility is discarded by lemma 4.3 . Second possibility is discarded by lemma 4.2 if B is known to be bad.
\end{proof}

\newpage
\subsubsection{Proof of Corollary 4.1}
For each n, the set of all edges of $T_{n}$ which belong to spaghetti strands may be partitioned into two subsets $E_n (1)$, $E_n (2)$ satisfying the following properties:
\begin{itemize}
\item For each n, $E_n (1)$ contains at most 2 edges from each spaghetti strand of $T_n$ .
\item $|E_n (2)| = o(n)$.
\end{itemize}

\begin{proof}

Suppose that we have a spaghetti strand which has more than 2 edges as shown in the figure.

\begin{figure}[h]
\centering
\includegraphics[width=150px,height=150px]{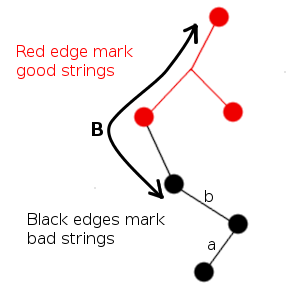}
\end{figure}

In the above figure the B is the address of the word which belongs to, say $T_n(0)$ and not $T_n(1)$. So, $B0 \in R_n$ while $B1 \notin R_n$. Also it has a length 2 strand attached below it. According to the figure, they are $bB \in T_n(0)$ and $abB \in T_n(0)$ (and not in $T_n(1)$).All of $abB0,bB0$ and $B0$ are good strings.

Since, the first two occurrences of any word are followed by complementary bits, we shall have atleast 3 occurrences of $abB$ as $abB0,abB1,abB0$. Since bB0 is a good string, there must be a $\bar{a}bB0$ occurring before the second $abB0$. Similarly since B is good there must also be a $\bar{b}B0$ before the second $abB0$. Hence we see that $B$ must occur atleast 5 times before $abB0$ is inducted into $R_n$.This concludes the first part of the proof.

Any word B which occurs 5 times, will follow this - either $ B \in T_n^*$, or $Bb \in T_n^*$ for some b. 

If the first 5 occurrences of B are followed by atleast two each of $a_1$ and $\bar{a}_1$, then $B \in T_n^*$. Otherwise the following cases may occur-

a) If the first 5 occurrences are $ B\bar{a}_{1},Ba_{1},Ba_{1},Ba_{1},Ba_{1}$, then the proximity of the words with (i) will strictly increase from (ii) to (v).So, every $Ba_{1}$
will match with the one immediately before it.Thus forming $ B\bar{a}_{1},Ba_{1}a_{2},Ba_{1}\bar{a}_{2},Ba_{1}a_{2},Ba_{1}\bar{a}_{2}$.

b) If the first 5 occurrences are $ Ba_{1},B\bar{a}_{1},Ba_{1},Ba_{1},Ba_{1}$,then the first three terms are given by $ Ba_{1}a_{2},B\bar{a}_{1},Ba_{1}\bar{a}_{2}$. After that since B is good and the Bs of (iii),(iv),(v) are all matched with (ii), $p(i,x)=0$ for $x \in \{iii,iv,v\}$. We also see $p(ii,iii)<p(ii,iv)<p(ii,v)$. This would lead to fourth $Ba_{1}$ matching with (iii) and fifth $Ba_{1}$ matching with (iv). Thus leading to $Ba_{1}a_{2},B\bar{a}_{1},Ba_{1}\bar{a}_{2},Ba_{1}a_{2},Ba_{1}\bar{a}_{2}$

Hence, in both cases a and b $Ba_{1} \in T_n^*$.

Now we make a sharp observation to conclude the proof of Corollary 4.1 . If $B\in Tn*$, we are done. But if $Ba_{1}\in T_n^*$,by virtue of the goodness of B and the pattern of proximity shown earlier, the first 5 occurrences of B will be exactly one of the following two cases-
$$ cB\bar{a}_{1},\bar{c}Ba_{1}a_{2},cBa_{1}\bar{a}_{2},cBa_{1}a_{2},cBa_{1}\bar{a}_{2}\quad \text{   or   } \quad \bar{c}Ba_{1}a_{2},cB\bar{a}_{1},cBa_{1}\bar{a}_{2},cBa_{1}a_{2},cBa_{1}\bar{a}_{2}$$

In both the cases, first two occurrences of $Ba_{1}\bar{a}_2$ is preceded by c, hence making it a bad string which belongs in $R_n$.

For each spaghetti strand with all good strings and $length>2$, we shall have a different string serving as B and hence a different bad string yielding out of it.Since there are only o(n) bad strings, we conclude that this case can only arise o(n) times. Think that you have a basket full of apples(bad strings). Each time you encounter a good string(here B) occurring 5 or more times and also unbalanced(ie. $B0 \in R_n$ and $B1 \in R_n$), you have to pick one out of it. There is also the other case where B is not good.In that case, if $B$ is unbalanced, take $B=xyZ$, and repeat the whole process where abB is replaced by xyZ.Going from B to Z will cost you two more black edges. So, the maximum total number of extra black edges that you might have to spend in such transitions is $2o(n)$ in the whole tree. After that you will definitely find a good string in B's role, and you will pick out one apple. The basket will become empty in a maximum of $2o(n)+o(n)=3o(n)$ black edges, in addition to the 2 edges in the lowermost part per strand (here marked as a and b).
\end{proof}

\subsubsection{Directions for future work}
Though we managed to prove all the previous theorems, we were unable to prove Proposition 3.1d which states that the number of bad strings in EM-sequence will grow at a rate of $o(n)$. This result, though given amongst the other parts of proposition 3.1, does not seem to be provable by using them and looks more like an independent theorem.However, experimental data agrees with the claim of the proposition. 

\newpage
\subsection{Extension of the method to weaker normality}

\subsubsection{Normality}
   For each a in $\{0,1\}$ let $N_S(a, n)$ denote the number of times the letter a appears in the first n digits of the binary sequence S. We say that S is simply normal if the limit

    $$\lim_{n\to\infty} \frac{N_S(a,n)}{n} = \frac{1}{2}$$

for each a. Now let w be any finite string in $\{0,1\}^+$ with $|w|=l$ and let $N_S(w,n)$ to be the number of times the string w appears as a substring in the first n digits of the sequence S. (For instance, if S = 01010101..., then $N_S(101, 8) = 3$.) S is simply normal to word length $l$ if, for all finite binary strings $w \in \{0,1\}^+$ having length $l$,

   $$\lim_{n\to\infty} \frac{N_S(w,n)}{n} = \frac{1}{2^{l}} $$

  A sequence will be called normal, if it is simply normal to all word lengths.
\subsubsection{Work done}

In this section, we extend the method used in \cite{keiffer} to prove a weaker form of simple normality of the EM-sequence to word length 2. Our primary result establishes 
$$ \limsup\limits_{n\rightarrow \infty}\frac{N_n(x)}{n} \leq \frac{8}{11}\text{ and } \liminf\limits_{n\rightarrow \infty}\frac{N_n(x)}{n} \geq \frac{1}{25} \quad	\forall x \in \{00,01,10,11\}$$

\begin{proof}
Let $T_n(x)$ be the subtree of $T_n$ ending with $x$. Eg. $T_n(01)$ is the subtree of $T_n$ consisting of union of all the paths in $T_n$ ending with 01, corresponding with $R_n(x)$ which is the set of all words in $R_n$ ending with $x$. Similarly, let $N_n(x)$ be the number of occurrences of $x$ in $x_1^n$.

\begin{theorem} 
$|T_n(x)| = |R_n(x)| = N_n(x) + o(n)$
\end{theorem}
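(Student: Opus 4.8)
The plan is to mimic the proof of Propositions 3.1b/3.1c but with occurrences of a single bit replaced by occurrences of the length-$2$ word $x$. First I would establish $|T_n(x)| = |R_n(x)|$, which is immediate from the definition of $T_n(x)$ as the union of the paths in $T_n$ ending in $x$: each such path corresponds bijectively to a word of $R_n$ whose last two edge-labels spell $x$, so the vertex count of $T_n(x)$ equals $\mathrm{card}(R_n(x))$ (up to the harmless bookkeeping of shared prefixes, exactly as in the tree $T_n$ of Proposition 3.1a). The substance is therefore the second equality, $|R_n(x)| = N_n(x) + o(n)$.

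For the lower bound $|R_n(x)| \geq N_n(x) - o(n)$, I would run the counting argument from Proposition 3.1b. Each time a bit is emitted we get a matched string ending at that position; over the first $n$ steps all but $O(\alpha(n)) = o(n)$ of these matched strings are distinct (the exceptions being the initial-segment matches and the bad-word/third-occurrence matches) and each lies in $R_n$. Now instead of classifying by the last bit, classify by the last two bits: every occurrence of $x$ ending at position $t$ (for $t$ large enough, i.e. $t \geq $ some fixed constant) is the suffix of some matched string ending at $t$, hence contributes a distinct element of $R_n(x)$, except on the $o(n)$-sized exceptional set. This gives $\mathrm{card}(\{b \in R_n : x \text{ is a suffix of } b\}) \geq N_n(x) - o(n)$. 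One subtlety here: a word $b \in R_n$ ending in $x$ has length $\geq 2$, and the very shortest such words (length exactly $2$, i.e. $b = x$ itself) need to be accounted for — but there is at most one such word per $x$, so it is absorbed into the $o(n)$ term.

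For the upper bound I would use the analogue of equation (3): summing $\mathrm{card}(R_n(x))$ over the four length-$2$ words $x \in \{00,01,10,11\}$ counts exactly the words of $R_n$ of length $\geq 2$, which is $|R_n| - O(1) = |R_n| - o(n)$. Then, exactly as in the proof of 3.1b, if $|R_n(x)| \geq N_n(x) + cn$ for some $c>0$ on an infinite set $S$ of $n$'s, I add the lower bounds just obtained for the other three words, use $\sum_{x} N_n(x) = n - 1$ (every position $2 \leq t \leq n$ starts... rather, every consecutive pair is counted once, so the total is $n-1$), and conclude $|R_n| \geq n + c'n - o(n)$ for $n \in S$ and some $c' \in (0,c)$, contradicting $|R_n| = n + o(n)$ from Proposition 3.1a. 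Combining the two bounds yields $|R_n(x)| = N_n(x) + o(n)$, and with the first paragraph, $|T_n(x)| = |R_n(x)| = N_n(x) + o(n)$.

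The main obstacle I anticipate is not any single inequality but making the exceptional-set bookkeeping genuinely rigorous at word length $2$: one must check that the set of matched strings that fail to be distinct, or that are too short to have a length-$2$ suffix, or that are initial segments, is still $o(n)$ when we track suffixes of length $2$ rather than length $1$ — and in particular that passing from "ends in bit $a$" to "ends in word $x$" does not inflate the error term (it does not, since each matched string has a unique length-$2$ suffix, so the partition of the $n - o(n)$ good matches by last two bits is as clean as the partition by last bit). A secondary point to handle carefully is the identity $\sum_x N_n(x) = n-1$ versus $\sum_a N_n(a) = n$ used in the single-bit proof; the off-by-one is immaterial to the $o(n)$ conclusion but should be stated correctly.
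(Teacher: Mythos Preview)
Your proposal is correct and follows essentially the same approach as the paper: establish the lower bound $|R_n(x)| \geq N_n(x) - o(n)$ by classifying the (all but $o(n)$ distinct) matched strings by their length-$2$ suffix, then obtain the upper bound by summing over the four words in $\{00,01,10,11\}$ and deriving a contradiction with $|R_n| = n + o(n)$ from Proposition~3.1a. The paper phrases the argument for general word length $l$ and introduces an explicit threshold index $k$ beyond which all matches have length $\geq l$, while you specialize to $l=2$ and absorb the short-match exceptions directly into the $o(n)$ term; these are cosmetic differences, and your bookkeeping of the off-by-one in $\sum_x N_n(x) = n-1$ is in fact more careful than the paper's.
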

\begin{proof}
We know,
 $$R_n=\{b^{+}(i):1 \leq i \leq x\}$$
where $x$ is the largest index such that $x+L_{x}-1 \leq n$. 
All the matched strings in the above range of i will be there in $R_n$. Let the length of $x$ be $l$. As any string is matched only once, and twice if it is an initial string, we will have atmost $2^{l-1}$ words of length $l-1$ in matched. And we also have one word of each length as an initial string, leading to $l-1$ matches happening twice. The number of matches of length $\geq l$ in $x_1^n$, $n_l$ follows-
$$ n_l \leq 2^{l-1}+(l-1)$$
Let $k$ be the index at which $n_l$ achieves the above upper bound. The rest of the matches in $x_k^n$ are of length $l$ or more and so will end with a member of word of length $l$. Let $U_l$ be the set of all binary strings of length $l$.  Analogous  to equation(1) and (2) we shall have,
\begin{equation}
card(\{b \in R_{n} : \text{y is rightmost bit of b}\}) \geq N_k^n(y)- N_{\alpha(n)}(y)\quad \forall y\in U_l
\end{equation}
Suppose for some c and $w\in U_l$,
\begin{equation}
card(\{b \in R_{n} : \text{w is rightmost bit of b}\}) \geq N_k^n(w)+cn
\end{equation}
for infinitely many values of n. Let the set of all such n's be S.
Adding equation 8 and all concrete instances of equation 7 $\forall y\in U_l \setminus\{w\}$, we get
\begin{equation}
	|R_{n}| \geq (n-k) + cn - \alpha(n)	+ N_{\alpha(n)}(w)	,\; \forall n \in S
\end{equation}
To get equation 9 we used the following facts-
$$ |R_n| > card(\{b\in R_n\text{with length(b) $\geq l$}\}) =  \sum\limits_{y\in U_l} card(\{b \in R_{n} : \text{y is rightmost bit of b}\})$$
$$ \sum\limits_{y\in U_l}N_k^n(y) = n-k $$
$$ \sum\limits_{y\in U_l}N_{\alpha(n)}(y) = \alpha(n)$$
Clearly, equation (9) translates to 
$$
	|R_{n}| \geq (c+1)n	- o(n)	,\; \forall n \in S
$$
which contradicts proposition 3.1a and hence,we conclude our assumption is wrong. Indeed,
$$|R_n(x)| = N_n(x) + o(n) \quad \forall x$$
\end{proof}

We attempt to find a ratio amongst the number of words ending with 00,10,01 and 11. Given a string $xyBa0 \in R_n$, we may safely assume that $Ba$ will be balanced. Assuming goodness of all encountered strings, we must have the following words in $x_1^n$ -
$$xyBa0,\bar{y}Ba1,\bar{x}yBa1,xyBa1,\bar{y}Ba0,\bar{x}yBa0,xyBa0$$

But,in addition there must also be $\bar{y}B\bar{a}, yB\bar{a} \text{ and } xyB\bar{a}$. Atleast 2 occurrences of $B\bar{a}$ will be followed by same bit, and hence, $B\bar{a}$ will be in atleast one of $T_n(0)$ or $T_n(1)$. Extending the concept, if we consider $B=pqZ$, then you will also need to have $\bar{q}Z\bar{a}$ and $\bar{p}qZ\bar{a}$ in $x_1^n$. This would make the number of occurrences of $Z\bar{a}$ atleast 5. We know from the proof of Corollary 4.1 that any string which occurs 5 times is either balanced, or if it is unbalanced it costs a bad string. Therefore we conclude that all except o(n) such $Z\bar{a}$s will be balanced. Given that information, we now know that 
$$\{Za1, Za0, Z\bar{a}1, Z\bar{a}0\} \subseteq R_n$$
Let $\zeta_n$ be the set such that $\zeta_n= \{Z : Za1, Za0, Z\bar{a}1, Z\bar{a}0 \in R_n \}$, and $\zeta_n^*$ be the corresponding subtree in $T_n$.Let $\gamma_n=|\zeta_n^*|$.Let $j_{00}(n)$ be the number of edges whose paths to root end with 00. Define $j_{01}(n),j_{10}(n)\text{ and }j_{11}(n)$ similarly. Hence,
$$|T_n(00)|=\gamma_n+j_{00}(n), \quad |T_n(10)|=\gamma_n+j_{10}(n), \quad |T_n(01)|=\gamma_n+j_{01}(n), \quad |T_n(11)|=\gamma_n+j_{11}(n)$$
$$ |T_n(00)|+|T_n(10)|+|T_n(01)|+|T_n(11)|= 4\gamma_n+j_{00}(n)+j_{01}(n)+j_{10}(n)+j_{11}(n) $$
If $L(\zeta_n^*)$ be the number of leaf vertices of $\zeta_n$ and let $U(\zeta_n^*)$ be the number of unary vertices of $\zeta_n^*$. Then,
$$ \gamma_n = 2L(\zeta_n^*) + U(\zeta_n^*) -1$$
The maximum number of edges connected to $\zeta_n^*$ will be $2L(\zeta_n^*)+U(\zeta_n^*)$ which equals $\gamma_n+1$. Assuming maximum possible branching upto depth 2 and susequent spaghetti strands of length 2, we shall have
\begin{equation} 
j_{00}(n) \leq \gamma_n+2 \gamma_n + 4 \gamma_n + o(n)= 7\gamma_n +o(n) 
\end{equation}

A similar result will hold for 10,01 and 11 too.

Moving ahead, from Theorem 1,we have
$$|T_n(00)|+|T_n(10)|+|T_n(01)|+|T_n(11)|=N_n(00)+N_n(01)+N_n(10)+N_n(11)+o(n)=n+o(n)$$
$$|T_n(00)|=N_n(00)+o(n)$$
we shall have,
$$\limsup\limits_{n\rightarrow \infty}\frac{N_n(00)}{n}=\limsup\limits_{n\rightarrow \infty}\frac{|T_n(00)|}{|T_n(00)|+|T_n(10)|+|T_n(01)|+|T_n(11)|}$$
Using equation (10), we can find a sequence of positive numbers $\{\epsilon_n\}$ tending to 0 such that,
$$ j_{00}(n) \leq 7 \gamma_n + n\epsilon_n, \quad n=1,2,3,4,..$$
Using this, we obtain
\begin{align*} 
\frac{|T_n(00)|}{|T_n(00)|+|T_n(10)|+|T_n(01)|+|T_n(11)|} &= \frac{\gamma_n + j_{00}(n)}{4\gamma_n+j_{00}(n)+j_{10}(n)+j_{01}(n)+j_{11}(n)} \\
&\leq \frac{\gamma_n + j_{00}(n)}{4\gamma_n+ j_{00}(n)}\\
&\leq \frac{8\gamma_n + n\epsilon_n}{11\gamma_n +n\epsilon_n} \\
&\leq \frac{8}{11} + \frac{n\epsilon_n}{11\gamma_n}
\end{align*}

Lastly, we note that 
\begin{align*} n=|T_n(00)|+|T_n(10)|+|T_n(01)|+|T_n(11)|+o(n)&=4\gamma_n+j_{00}(n)+j_{10}(n)+j_{01}(n)+j_{11}(n)+o(n) \\
&\leq 32\gamma_n + o(n)
\end{align*}

making the term $n/\gamma_n$ $O(1)$ and hence the ratio will converge to 8/11.

Attemting to find a lower bound, 
\begin{align*} 
\frac{|T_n(00)|}{|T_n(00)|+|T_n(10)|+|T_n(01)|+|T_n(11)|} &= \frac{\gamma_n + j_{00}(n)}{4\gamma_n+j_{00}(n)+j_{10}(n)+j_{01}(n)+j_{11}(n)} \\
&\geq \frac{\gamma_n }{4\gamma_n+ j_{10}(n)+ j_{01}(n)+ j_{11}(n)}\\
&\geq \frac{\gamma_n }{25\gamma_n +n\epsilon_n} \\
&\geq \frac{\gamma_n }{25\gamma_n +32\gamma_n\epsilon_n} \\
&= \frac{1}{25+32\epsilon_n}
\end{align*}
As $\{\epsilon_n\}\rightarrow 0$, the ratio converges to $\frac{1}{25}$.

Hence, we arrive at our final conclusion,
$$ \limsup\limits_{n\rightarrow \infty}\frac{N_n(x)}{n} \leq \frac{8}{11}\text{ and } \liminf\limits_{n\rightarrow \infty}\frac{N_n(x)}{n} \geq \frac{1}{25} \quad	\forall x \in \{00,01,10,11\}$$

\end{proof}

\subsubsection{Directions for future work}
Experimental results show that the EM sequence is simply normal to all word lengths. However, the results obtained by the proof method employed by Keiffer and extended here, will perform worse in higher word lengths. The primary reason for this is that unlike spaghetti strands which are linear, the edges hanging from $\zeta_n^*$ in this case had branches . This will still be the case for higher word lengths which will require us to move up in the tree.The branched extensions to the balanced parts will bring a near exponential increase in the number of unbalanced edges, which will easily outweigh the number of balanced strings by a large margin. Therefore, this method becomes worse for analysis of higher word lengths. The key to obtaining a perfectly balanced ratio lies in finding a ratio between the unbalanced words ending with different words in $U_l$. Any such relation obtained, however small or large, will definitely improve the ratios given by us and \cite{keiffer}.

\addcontentsline{toc}{section}{References}
\bibliography{emsnormality}{}
\bibliographystyle{plain}

\end{document}